\def \r{\mathbb R}
\def \z{\mathbb Z}
\def \D{\mathcal D}
\def \({\langle}
\def \){\rangle}
\DeclareMathOperator{\sn}{sn} 
\DeclareMathOperator{\alt}{Alt}
\DeclareMathOperator{\crit}{Crit}
\newtheorem{theorem}{Theorem}[section]
\newtheorem{proposition}[theorem]{Proposition}
\newtheorem{corollary}[theorem]{Corollary}
\theoremstyle{remark}
\newtheorem{remark}[theorem]{Remark}
\theoremstyle{definition}
\newtheorem{definition}[theorem]{Definition}
\newtheorem{example}[theorem]{Example}
\newtheorem{problem}{Problem}
\newtheorem{conjecture}[problem]{Conjecture}
\title{energies of knot diagrams}
\author{Oleg Karpenkov, Alexey Sossinsky}
\date{2 June 2011}
\thanks{Oleg Karpenkov is partially supported by RFBR SS-709.2008.1
grant and by FWF grant No.~M~1273-N18.}
\thanks{Alexey Sossinsky is partially
supported by the RFBR-CNRS-a grant \#10-01-93111.}
\keywords{knot diagram, energy functional, curvature}
\email[Oleg Karpenkov]{karpenkov@tugraz.at}
\email[Alexey Sossinsky]{asossinsky@yandex.ru}
\begin{document}
\input{epsf}

\begin{abstract}  We introduce and begin the study of new knot energies
defined on knot diagrams. Physically, they model the internal
energy of thin metallic solid tori squeezed between two parallel
planes. Thus the knots considered can perform the second and third
Reidemeister moves, but not the first one. The energy functionals
considered are the sum of two terms, the uniformization term
(which tends to make the curvature of the knot uniform) and the
resistance term (which, in particular, forbids crossing changes).
We define an infinite family of uniformization functionals,
depending on an arbitrary smooth function $f$ and study the
simplest nontrivial case $f(x)=x^2$, obtaining neat normal forms
(corresponding to minima of the functional) by making use of the
Gauss representation of immersed curves, of the phase space of the
pendulum, and of elliptic functions.
\end{abstract}

\maketitle

\section*{Introduction}

In this paper, we introduce certain energy functionals defined on
(a special class of) knot diagrams. Intuitively, these functionals
correspond to (i.e., are mathematical models of) a thin (but not
infinitely thin) knotted resilient wire constrained between two
parallel planes.

The idea of defining energy functionals for classical knots is due
to H.~K.~Moffat~\cite{Mof}. It was further developed by
W.~Fukuhara~\cite{Fuku}, J.~O'Hara in~\cite{O-H1}, \cite{O-H2},
\cite{O-H3}, \cite{O-H4}, M.~H.~Freedman, and  Z.-X.~He
in~\cite{Freed2}, M.~H.~Freedman, Z.-X.~He, and Z.~Wang
in~\cite{Freed} and together with S.~Bryson in~\cite{Freed3},
D.~Kim, R.~Kusner in~\cite{Kim}, O.~Karpenkov in~\cite{EKar1},
\cite{EKar2} and others. The main underlying idea is to define a
real-valued functional on the space of knots so that gradient
descent along the values of the functional does not change the
ambient isotopy class of the knot and leads to a well-defined
minimum, which may be regarded as the normal form of the knot
(corresponding to the given functional); if this gradient descent
leads to the same normal form starting from any two knots in the
same ambient isotopy class, we would have a solution of the knot
classification problem -- knots would be entirely classified by
their normal forms.

However, none of the functionals studied so far have achieved this
ambitious goal, and most of the experts appear to be skeptical
about the existence of such a functional. Nevertheless, the
functionals considered, in particular the M\"obius energy
functional devised by J.~O'Hara, possess some important
properties,  for example, computer experiments show (see
\cite{Freed}) that any knot diagram of the unknot is taken to the
round circle via gradient decent along the values of the M\"obius
energy functional, but the claim that the unknot has a unique
normal form (the circle) has not been proved.

\smallskip
The functionals studied in the present paper differ from those
mentioned above in that they are defined on knot diagrams (rather
than knots in 3-space), gradient descent for them is not invariant
with respect to the Reidemeister $\Omega_1$ move (but is invariant
with respect to $\Omega_2$ and  $\Omega_3$), and the diagrams
themselves are not curved lines (they are like very thin solid
tori lying on the plane). The real life prototypes of these ``thin
solid flat knots" were resilient thin wires constrained between
two parallel planes. They were studied experimentally by
A.~B.~Sossinsky in \cite{ABS1}.

Consider a tubular flat knot of length $2\pi$ and
cross-section radius $\varepsilon>0$, $\varepsilon\ll 1$ (the last
inequality means ``as small as we need in the given situation'');
the knot is constrained between two parallel planes at a small
distance (say $6\varepsilon$) from each other. Let us supply
the core $K_0$ (central line) of our knot $K$ with an energy
functional $E$ consisting of two summands:
$$
E=U+R.
$$
The functional $U$, which we shall call a {\it uniformizational}
functional, serves to bring the core (via gradient descent)  to a
``perfectly rounded off'' form. In this paper we consider a family
of functionals depending on arbitrary continuous function $f$,
$$
U_f(\gamma)=\int\limits_{S^1}f(\kappa(\gamma(t)))dt,
$$
where $\kappa(\gamma(t))$ is the oriented curvature at the point
$\gamma(t)$. The summand $R$ (which we call the {\it resistance}
functional)  forbids to perform deformations under which the knot
type of the diagram changes (in particular, it forbids crossing
changes). We define it as follows
$$
R(\gamma)=\sum\limits_{\ell\in \alt(\gamma)} \frac{1}{A(\ell)},
$$
where the sum is taken over all cycles in knot diagrams bounding a
disk and such that in their consecutive  intersections overpasses and underpasses alternate  (we refer to Subsection~2.1 for the exact definitions), and by $A(\ell)$ we denote the area of the cycle $\ell$. Further (in Subsections~2.2 and~2.3), we introduce two modification of this functional that are easy to compute.

\vspace{2mm}

This paper is organized as follows.

In the first section, we study the uniformization functionals. In
Subsection~1.1, we define the family of uniformization functionals
$U_f(\cdot)$. In Subsection 1.3, we show how to calculate the
minima of these functionals by using the Gauss representation of
regular curves described in Subsection 1.2. In Subsection 1.4, we
perform such calculations for the particular case in which the
function $f$ is given by $f(x)=x^2$. It turns out that the
critical curves are related to the trajectories of pendulum. We
prove that all the critical trajectories are either circles passed
several times, or $\infty$-shaped curves (defined by elliptic
functions) passed several times.

Section~2 is dedicated to resistance functionals. We start with the
general notion of resistance energy in Subsection~2.1. In
Subsections~2.2 and 2.3, we show how to simplify the complexity of
resistance energy (material resistance energy and its genericity
modification). In Subsection~2.4, we show that a regular
deformation of bounded genericity energy preserves the knot type.
Finally, in Subsection~2.5, we say a few words about the influence
of material resistance energy on the shape of the normal form in
the pendulum case (i.e., for $f(x)=x^2$).

In the brief concluding Section 3, we discuss the perspectives of this
research and present some conjectures.

\section{On the uniformization of regular curves}

In this section, instead of knot diagrams, we consider regular
$C^2$-curves of constant length $2\pi$ in the plane, in particular
regular curves, construct various functionals on the space of such
curves and investigate the normal forms given by the functionals.
Gradient descent along  these functionals tends to make the local
structure of the curves more uniform (in a sense related to their
curvature), and so we call them uniformization functionals.

\subsection{A general family of uniformization functionals}

Let $f$ be a real continuous function satisfying $f(0)=0$. A {\it
uniformization functional $U_f$} associated with the function $f$
is the functional defined at a regular $C^2$-curve $\gamma$ (with
arclength parameter $t$) as follows
$$
U_f(\gamma)=\int\limits_{S^1}f(\kappa(\gamma(t)))dt,
$$
where $\kappa(\gamma(t))$ is the oriented curvature at the point
$\gamma(t)$.

\begin{example}
In the case when $f(x)=x$, the resulting functional is constant on
regular homotopy classes of regular $C^2$-curves and equal to the
Whitney index of the curve, i.e., the number of revolutions performed by the tangent vector to the curve as its initial point goes once around the curve.

\end{example}

Let us expand the definition of uniformization functionals to the
broader class of curves for which $\kappa(\gamma(t))$ is not defined
at all points of the given curve.

\begin{definition}
Let $f$ be a positive continuous function satisfying $f(0)=0$. The
{\it extended uniformization functional $\hat U_f$} associated with the function $f$ is the functional defined at a regular $C^2$-curve  $\gamma$ (with arc length parameter $t$) as follows
$$
\hat U_f(\gamma)=\lim\limits_{\varepsilon\to 0}
\frac{1}{\varepsilon}\int\limits_{S^1}f\bigg(\frac{1}
{R\big(\gamma(t{-}\varepsilon),\gamma(t),\gamma(t{+}\varepsilon)
\big)}\bigg)dt,
$$
where $R(\gamma(t-\varepsilon),\gamma(t),\gamma(t+\varepsilon))$
is the radius of the circle passing through the points
$\gamma(t-\varepsilon)$, $\gamma(t)$, and $\gamma(t+\varepsilon)$.
\end{definition}

The next assertion readily follows from compactness arguments.

\begin{proposition}
For any $C^2$-regular curve $\gamma$,
$$
\hat U_f(\gamma)=U_f(\gamma).
$$
\qed
\end{proposition}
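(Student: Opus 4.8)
The plan is to reduce the claimed identity to the convergence of integrands and then to interchange the limit with the integral; the substantive point is the classical fact that the circle through three nearby points of a $C^2$ curve tends to its osculating circle. First I would record the pointwise asymptotics at a fixed $t$. Using the circumradius formula $R=abc/(4S)$, where $a,b,c$ are the side lengths of the triangle $\gamma(t-\varepsilon)\gamma(t)\gamma(t+\varepsilon)$ and $S$ is its area, together with the second-order Taylor expansion of the arclength-parametrized curve (so that $|\gamma'|\equiv 1$ and $\gamma''=\kappa\,n$), a direct computation gives $a\approx b\approx\varepsilon$, $c\approx 2\varepsilon$, and $S\approx\tfrac12|\kappa(\gamma(t))|\,\varepsilon^3$, whence
\[
\frac{1}{R(\gamma(t-\varepsilon),\gamma(t),\gamma(t+\varepsilon))}=|\kappa(\gamma(t))|+o(1),\qquad \varepsilon\to 0 .
\]
Note that the circumradius is intrinsically unsigned, so the limit is the absolute curvature; for the identity with the oriented functional $U_f$ one should therefore read $f$ as an even function (or attach a sign to $R$ according to the orientation of $\gamma$), a condition that is automatic in the main case $f(x)=x^2$ treated in Subsection~1.4.

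The second and main step is to upgrade this to convergence that is uniform in $t$, and this is exactly where compactness enters. Since $S^1$ is compact and $\gamma$ is of class $C^2$, the derivative $\gamma''$ is uniformly continuous, so the remainder in the Taylor expansion above is dominated by a single modulus-of-continuity bound independent of $t$. Hence $1/R\to|\kappa\circ\gamma|$ uniformly on $S^1$, and in particular the functions $1/R$ are uniformly bounded (by $\sup_t|\kappa|+1$ for all sufficiently small $\varepsilon$). Because $f$ is continuous, it is uniformly continuous on the compact interval containing all these values, and therefore $f(1/R)\to f(|\kappa\circ\gamma|)=f(\kappa\circ\gamma)$ uniformly on $S^1$.

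Finally, uniform convergence of the integrands over the finite-measure circle $S^1$ justifies interchanging the limit with the integral, giving $\int_{S^1}f(1/R)\,dt\to\int_{S^1}f(\kappa(\gamma(t)))\,dt=U_f(\gamma)$, which is the asserted equality $\hat U_f(\gamma)=U_f(\gamma)$. I expect the only genuine work to lie in the uniform circumradius estimate of the second step: once the uniform control of the Taylor remainder is secured from compactness, the passage of the limit through $f$ and then through the integral is routine.
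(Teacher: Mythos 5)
The paper offers no actual proof of this proposition --- it is stated with a bare \qed{} after the remark that it ``readily follows from compactness arguments'' --- and your argument is precisely the natural fleshing-out of that hint: pointwise convergence of the circumradius to the osculating radius via the second-order Taylor expansion, upgraded to uniform convergence by the uniform continuity of $\gamma''$ on the compact $S^1$, then pushed through $f$ and the integral. That part is correct, and your observation that the circumradius is unsigned while $U_f$ uses the \emph{oriented} curvature (so that the identity requires $f$ even, or a signed version of $R$) is a genuine point the paper glosses over.

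However, there is one concrete discrepancy you must address: the paper's definition of $\hat U_f$ carries a normalizing prefactor $\frac{1}{\varepsilon}$ in front of the integral, which you have silently dropped. Your argument establishes
$$
\lim_{\varepsilon\to 0}\int_{S^1} f\Bigl(\tfrac{1}{R}\Bigr)\,dt \;=\; U_f(\gamma),
$$
and therefore, applied to the formula as literally printed, it yields $\hat U_f(\gamma)=\lim_{\varepsilon\to 0}\frac{1}{\varepsilon}\bigl(U_f(\gamma)+o(1)\bigr)$, which diverges whenever $U_f(\gamma)\neq 0$ (and $f\ge 0$ with $f(0)=0$ forces $U_f(\gamma)>0$ for any nonconstant-curvature case of interest, e.g.\ $f(x)=x^2$). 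Almost certainly the $\frac{1}{\varepsilon}$ is a typo in the paper, but a blind proof of the stated proposition has to either prove the statement with that factor (impossible) or explicitly note that the factor must be absent for the identity to hold. As written, your proof proves a (correct) statement that is not literally the one asserted; one sentence flagging and discharging the prefactor would close this gap.
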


It is natural to suppose that the first interesting family of
functionals for further study is the class associated with functions
of the form $x^\alpha$ for $\alpha>1$ (we shall say a few words
later about the case  $\alpha=2$ and its relationship to pendulum
motions).

\subsection{Gauss representation of regular curves}

Consider a regular curve $\gamma$ of length $2\pi$ with arclength
parametrization $t$. Consider a function $\alpha:[0,2\pi]\to\r$
such that
$$
\dot\gamma(t)=(\cos\alpha(t),\sin\alpha(t)).
$$
We then say that $\alpha$ is a {\it Gauss representation} of the
regular curve $\gamma$. (Here and later by $\dot g$ we denote the
derivative ${\partial g}/{\partial t}$.)

Suppose now that we are given a function $\alpha$. {\it When does
this function represent a regular curve?} Let us briefly answer
this question. Consider the following three conditions on
$\alpha$. Since the curve is closed, we have the ''coordinate''
conditions:
\begin{equation}\label{cond1}
\int\limits_0^{2\pi}\cos\alpha(t)dt=\int\limits_0^{2\pi}\sin\alpha(t)dt=0.
\end{equation}
In addition we know that
\begin{equation}\label{cond2}
\alpha(0)=\alpha(2\pi).
\end{equation}

\begin{proposition}
If the above three conditions are satisfied, then the resulting
curve is regular. \qed
\end{proposition}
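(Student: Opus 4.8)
The plan is to recover the curve from its prescribed tangent field by integration and then to read off each of the three conditions as exactly the hypothesis needed for one feature of a regular closed curve. Concretely, I would fix a basepoint $p_0\in\r^2$ and set
\[
\gamma(t)=p_0+\int_0^t\big(\cos\alpha(s),\sin\alpha(s)\big)\,ds,\qquad t\in[0,2\pi].
\]
Since $\alpha$ is continuous, this defines a $C^1$ curve, and by the fundamental theorem of calculus $\dot\gamma(t)=(\cos\alpha(t),\sin\alpha(t))$, so $\alpha$ is indeed a Gauss representation of $\gamma$. The cleanest observation comes first: regularity in the pointwise sense is automatic, because $|\dot\gamma(t)|^2=\cos^2\alpha(t)+\sin^2\alpha(t)=1$ for every $t$, so the velocity never vanishes and $t$ is an arclength parameter. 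No hypothesis on $\alpha$ is required for this; the entire role of the three conditions is to ensure that $\gamma$ closes up into a smooth loop rather than an open arc.

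Next I would use the coordinate conditions~\eqref{cond1} to close the curve. By construction
\[
\gamma(2\pi)-\gamma(0)=\left(\int_0^{2\pi}\cos\alpha(t)\,dt,\ \int_0^{2\pi}\sin\alpha(t)\,dt\right)=(0,0),
\]
so $\gamma(2\pi)=\gamma(0)$ and $\gamma$ descends to a continuous map on $S^1=[0,2\pi]/(0\sim 2\pi)$. To see that the loop is regular across the seam $t\equiv 0$, I would invoke~\eqref{cond2}: the one-sided unit tangents at the two ends are $(\cos\alpha(0),\sin\alpha(0))$ and $(\cos\alpha(2\pi),\sin\alpha(2\pi))$, and $\alpha(0)=\alpha(2\pi)$ forces them to agree. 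Hence $\dot\gamma$ is well defined and continuous on all of $S^1$, has unit length everywhere, and $\gamma$ is a regular closed curve, completing the argument.

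I do not expect any serious obstacle: the content is purely bookkeeping, since unit speed makes nonvanishing velocity free of charge, and the two integral conditions together with the angle-matching condition are precisely what is needed to glue the endpoints into a smooth loop. The only point deserving a word of care is the distinction between regularity as a pointwise property (which holds trivially) and regularity as a property of the closed curve on $S^1$ (where the seam at $t\equiv 0$ must be checked), which is why the tangent-matching condition~\eqref{cond2} cannot be dropped even though it does not enter the velocity computation.
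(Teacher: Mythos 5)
Your proof is correct and is essentially the argument the paper intends: the paper states this proposition with no written proof (just a \qed), and your verification --- integrate the prescribed unit tangent field, note that unit speed makes the velocity nonvanishing for free, use the two integral conditions to close the curve, and use the endpoint condition on $\alpha$ to match tangents across the seam --- is exactly the routine check being omitted. Nothing is missing.
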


\subsection{Critical points of the energy functionals $U_f$}

Let us study critical points of the energy functionals $U_f$. Notice
that in the Gauss representation we have $\kappa=\dot \alpha$, so
the functional is of the following form:
$$
U_f(\alpha)=\int\limits_0^{2\pi}f(\dot \alpha) dt.
$$

\begin{theorem}
Consider a $C^2$-regular curve $\gamma$ and a nonnegative
$C^2$-function $f$ satisfying $f(0)=0$. Let $\alpha$ be the Gauss
representation of $\gamma$. Suppose $\alpha$ is critical for
$U_f(\gamma)$. Then there exist constants $C_1$ and $C_2$ such
that $\alpha$ satisfies
\begin{equation}\label{e1}
f''(\dot \alpha)\ddot \alpha=C_1\cos \alpha+C_2\sin\alpha.
\end{equation}
\end{theorem}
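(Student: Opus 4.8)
The plan is to read off \eqref{e1} as the Euler--Lagrange equation of the constrained variational problem ``make $U_f(\alpha)=\int_0^{2\pi}f(\dot\alpha)\,dt$ critical subject to the closure conditions \eqref{cond1}.'' A pleasant feature of the Gauss representation is that there is no length constraint to track: for any $\alpha$ the reconstructed curve $t\mapsto\gamma(0)+\int_0^t(\cos\alpha,\sin\alpha)\,ds$ automatically has unit speed, hence length $2\pi$ on $[0,2\pi]$. The only genuine constraints are therefore the two closure integrals $\int_0^{2\pi}\cos\alpha\,dt=0$ and $\int_0^{2\pi}\sin\alpha\,dt=0$, while \eqref{cond2} merely records that admissible variations are periodic.

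I would then introduce Lagrange multipliers $\lambda_1,\lambda_2$ and vary the augmented functional $\int_0^{2\pi}\big(f(\dot\alpha)-\lambda_1\cos\alpha-\lambda_2\sin\alpha\big)\,dt$ freely over periodic $C^2$ perturbations $\alpha+s\eta$. Differentiating at $s=0$ and integrating the $f'(\dot\alpha)\dot\eta$ term by parts gives
$$
\int_0^{2\pi}\Big(\!-\tfrac{d}{dt}f'(\dot\alpha)+\lambda_1\sin\alpha-\lambda_2\cos\alpha\Big)\eta\,dt+\big[f'(\dot\alpha)\eta\big]_0^{2\pi}=0.
$$
The boundary term vanishes because $\dot\alpha$ is $2\pi$-periodic (as $\gamma$ is a closed $C^2$ curve) and $\eta(0)=\eta(2\pi)$, so $f'(\dot\alpha)\eta$ agrees at the two endpoints. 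Since $\eta$ is otherwise arbitrary, the fundamental lemma yields $\tfrac{d}{dt}f'(\dot\alpha)=\lambda_1\sin\alpha-\lambda_2\cos\alpha$, that is $f''(\dot\alpha)\ddot\alpha=C_1\cos\alpha+C_2\sin\alpha$ with $C_1=-\lambda_2$, $C_2=\lambda_1$, which is exactly \eqref{e1}.

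Two points deserve care, and I expect the second to be the real obstacle. First, the use of multipliers must be justified: one checks the constraint qualification that $\cos\alpha$ and $\sin\alpha$ are linearly independent in $L^2(0,2\pi)$ (which fails only for degenerate $\gamma$), so that orthogonality of $\tfrac{d}{dt}f'(\dot\alpha)$ to every admissible $\eta$ genuinely forces it into $\mathrm{span}\{\cos\alpha,\sin\alpha\}$. Second, and more delicately, $\gamma\in C^2$ gives only $\alpha\in C^1$, so $\ddot\alpha$ need not exist a priori and the integration by parts above is formal. The clean route is to argue in integrated (du Bois--Reymond) form: criticality gives $f'(\dot\alpha)=\mathrm{const}+\int_0^t(\lambda_1\sin\alpha-\lambda_2\cos\alpha)\,ds$, whose right-hand side is $C^1$ (here the closure conditions on $\alpha$ ensure $\int_0^{2\pi}(\lambda_1\sin\alpha-\lambda_2\cos\alpha)\,dt=0$, so the accompanying boundary term also drops out). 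This shows $f'(\dot\alpha)$ is differentiable, and differentiating recovers \eqref{e1}, with $\ddot\alpha$ understood through $\tfrac{d}{dt}f'(\dot\alpha)$. This bootstrapping both legitimizes the appearance of $\ddot\alpha$ and supplies the constants $C_1,C_2$.
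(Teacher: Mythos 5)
Your proposal is correct and is essentially the paper's own argument: the paper likewise takes the first variation of $U_f$ over variations $\beta$ constrained by the linearized closure conditions \eqref{cond1} (i.e.\ $\beta$ orthogonal to $\cos\alpha$ and $\sin\alpha$ in $L^2$) and concludes that $f''(\dot\alpha)\ddot\alpha$ must lie in $\mathrm{span}\{\cos\alpha,\sin\alpha\}$, which is exactly your Lagrange-multiplier formulation in its orthogonal-complement guise. Your du Bois--Reymond bootstrapping and the constraint-qualification check are welcome refinements of points the paper leaves implicit, but they do not change the route.
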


\begin{proof}
Consider a variation $\alpha+h\beta$ with a small parameter $h$.
First, notice the following. As we vary the closed curve, the
variation $\alpha+h\beta$ satisfies conditions~(\ref{cond1})
above, i.e., the derivative $\frac{d}{dh}$ of the corresponding
integral equals zero, which is equivalent to
\begin{equation}\label{e2}
\int\limits_0^{2\pi}\sin(\alpha(t))\beta(t)dt=
\int\limits_0^{2\pi}\cos(\alpha(t))\beta(t)dt=0.
\end{equation}
Secondly, for the case in which $\alpha$ is a critical point, all the
variations are zero. Hence
$$
\int\limits_0^{2\pi}f'(\dot \alpha)\dot \beta dt=0, \quad
\text{which is equivalent to} \quad
\int\limits_0^{2\pi}f''(\dot \alpha)\ddot \alpha \beta dt=0.
$$
The last equation holds for any variation $\beta$ satisfying
Equations (4);  
therefore, we obtain
$$
f''(\dot \alpha)\ddot \alpha=C_1\cos\alpha+C_2\sin\alpha,
$$
where $C_1$ and $C_2$ are some constants.
\end{proof}

\begin{remark}
Theorem 1.5 means that the minima of the functional $U_f$ may be
found by solving (for $\alpha$) the differential equation (1) with
the appropriate initial conditions. Then the values of $\alpha$ at
such minima specify curves that may be regarded as normal forms of
the given curve. This approach allows to obtain normal forms by
direct computations not involving gradient descent along the
values of the functional.
\end{remark}

\subsection{The case $f(x)=x^2$: the pendulum}

 Let $f(x)=x^2$, then
Equation~(3) 
becomes
$$
\ddot \alpha=C_1\cos\alpha+C_2\sin\alpha.
$$
After an appropriate Euclidean transformation, we obtain the
equation for the simple pendulum
$$
\ddot \alpha+\omega^2\sin\alpha=0
$$
for some nonnegative constant $\omega$ whose physical meaning is
as follows:
$$
\omega=\sqrt{g/L};
$$
here $L$ is the length of the pendulum, and $g$ is the
gravitational acceleration.

\vspace{2mm}

There are trivial solutions whose trajectories are circles passed
several times. They correspond to $\omega=0$. The others solutions
are more complicated. Consider the case $\omega> 0$. The law of
the motion of the pendulum is described by the equation
$$
\sin\frac{\alpha}{2}=\xi\sn(\omega t+t_0|\xi),
$$
where $\sn$ is the Jacobi elliptic sine, whose value $\sn(u|k)$ is
defined from the equation
$$
\sn(u|m)=\sin \phi,
$$
where $\phi$ is a solution of the equation
$$
u=\int\limits_0^{\phi} \frac{dt}{\sqrt{1-m^2\sin^2 t)}}.
$$
The constants $t_0$ and $\xi$ are parameters. Basically $t_0$ is
the starting time parameter on the curve, so without loss of
generality, we put $t_0=0$. The parameters $\omega$ and $\xi$
essentially define the trajectory.

Let us answer the following question.

\vspace{2mm}

{\bf Question.} {\it  Which trajectories $\{\alpha(t)|0\le t\le
2\pi\}$ defined by $(\xi,\omega)$ result in a closed
differentiable curve $\gamma_\alpha$?}

\vspace{2mm}

The answer to this question is as follows.

\begin{theorem}\label{shapes}
All critical curves of the functional are either circles passed
several times, or $\infty$-shaped curves passed several times.
$$
\epsfbox{solution.3}
$$
\end{theorem}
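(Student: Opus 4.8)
The plan is to characterize exactly which pairs $(\xi,\omega)$ yield closed curves, and to show that the closure conditions force the trajectory to be one of the two listed shapes. The starting point is the pendulum solution $\sin\frac{\alpha}{2}=\xi\sn(\omega t|\xi)$ together with the three closure conditions from Subsection~1.2: the periodicity condition~(\ref{cond2}) and the two coordinate conditions~(\ref{cond1}). First I would determine the behavior of $\alpha(t)$ over the interval $[0,2\pi]$ as governed by the pendulum dynamics, distinguishing the two qualitatively different regimes of motion in the phase space of the pendulum.

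**Next I would** split into cases according to $\xi$. The oscillatory (librational) regime corresponds to $\xi<1$, where $\alpha$ swings back and forth within an amplitude determined by $\xi$; the rotational regime corresponds to $\xi\ge 1$, where $\alpha$ increases monotonically. In each case the tangent direction $\alpha(t)$ is quasi-periodic with a period governed by the complete elliptic integral $K(\xi)$, and I would impose condition~(\ref{cond2}), namely $\alpha(0)=\alpha(2\pi)$, to force commensurability between the pendulum period and the curve length $2\pi$. This quantization selects a discrete family of admissible $\omega$ for each $\xi$. The oscillatory case is where the $\infty$-shaped curves arise: because $\dot\gamma=(\cos\alpha,\sin\alpha)$ and $\alpha$ oscillates symmetrically about its center, the curve sweeps out the two symmetric lobes of a figure-eight, traversed some integer number of times. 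The rotational case, including the degenerate limit $\omega=0$ where $\alpha$ is linear in $t$, gives the tangent vector a net winding and produces circles traversed several times.

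**The main obstacle will be** verifying that the two coordinate conditions in~(\ref{cond1}), $\int_0^{2\pi}\cos\alpha\,dt=\int_0^{2\pi}\sin\alpha\,dt=0$, are genuinely satisfied by the commensurable trajectories and do not cut down the family further or rule out the $\infty$-shaped solutions. Here I expect the symmetry of the Jacobi elliptic functions to do the essential work: $\sn$ is odd and has definite parity over its half- and quarter-periods, so that the integrals of $\cos\alpha$ and $\sin\alpha$ over a full pendulum period (or over a suitable symmetric sub-interval) vanish identically. I would make the substitution reducing these integrals to standard elliptic integrals and exploit the reflection symmetry $\alpha(t)\mapsto-\alpha(-t)$ (and the half-period shift of $\sn$) to show the coordinate integrals cancel. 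Confirming closure therefore reduces to a parity computation rather than an explicit evaluation.

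**Finally,** having shown that every critical $\alpha$ satisfying all three conditions falls into either the librational class (figure-eights) or the rotational/degenerate class (circles), and that in each class the commensurability integer counts how many times the basic shape is traversed, I would conclude that the critical curves are precisely multiply-covered circles or multiply-covered $\infty$-shaped curves, as claimed. The geometric interpretation via the pendulum phase portrait—closed orbits around the stable equilibrium giving figure-eights, and the rotating orbits giving circles—provides the organizing picture throughout.
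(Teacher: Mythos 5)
Your overall framework (phase-portrait case analysis, quantization of $\omega$ via the period $K(\xi)$, then checking the closure conditions) matches the paper's, but two of your key steps would fail. First, the rotational (full-swing) regime with $\omega>0$ does \emph{not} produce circles --- it produces no closed curves at all. In that regime $\dot\alpha$ is non-constant: the pendulum moves slowly where $\alpha$ is near odd multiples of $\pi$ (where $\cos\alpha<0$) and quickly near even multiples (where $\cos\alpha>0$), so after the substitution $dt=d\alpha/\dot\alpha$ the integral $\int_0^{2\pi}\cos\alpha\,dt$ is strictly negative and the coordinate condition~(\ref{cond1}) fails. This is exactly the paper's argument for excluding that case; circles arise only from the degenerate solutions $\omega=0$, where $\alpha$ is linear in $t$. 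If you fold the $\omega>0$ rotational orbits into the ``circle'' class, you are asserting closedness (and constant curvature) for curves that have neither.

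Second, your plan to dispose of the coordinate conditions by ``a parity computation'' on the Jacobi functions is wrong for the condition $\int_0^{2\pi}\cos\alpha\,dt=0$ in the oscillatory regime. Writing $\cos\alpha=1-2\sin^2(\alpha/2)=1-2\xi^2\sn^2\big(rK(\xi)t/\pi\,\big|\,\xi\big)$, the integrand is even in $t$ and the integral equals $2\pi$ minus a positive quantity that varies with $\xi$; no symmetry of $\sn$ makes it vanish identically. It vanishes only for isolated values of $\xi$ (the paper locates them numerically at $\xi\approx\pm 0.90890856$), and it is precisely this transcendental condition that singles out the $\infty$-shaped curve. Parity does handle $\int_0^{2\pi}\sin\alpha\,dt=0$, but only for an even number $r$ of half-periods, since $\int_0^{2\pi}\sin\alpha\,dt=-\omega^{-2}\big(\dot\alpha(2\pi)-\dot\alpha(0)\big)$ and $\dot\alpha(2\pi)=-\dot\alpha(0)$ when $r$ is odd. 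You also omit the singular trajectories (equilibria and separatrices), though those are easy to exclude. Note finally that even the paper's own argument is partly experimental at the $\cos$-condition step, so a fully rigorous proof would require more than either text supplies.
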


Let $K$ be the complete elliptic integral of the first kind, i.e.,
$$
K(m)=\int\limits_0^1\frac{dt}{\sqrt{(1-t^2)(1-m^2t^2)}}.
$$

\begin{conjecture}\label{conj}
We conjecture that all $\infty$-shaped extremal curves are
homothetic to the curve $\gamma_\alpha$ for $\alpha$ satisfying
$$
\sin\frac{\alpha(t)}{2}=\xi\sn\Big(\frac{2rK(\xi)t}{\pi}\Big|\xi\Big),
$$
where $r$ is a nonzero integer and $\xi\approx .90890856$ $($see
Figure~\ref{solution.1} (left)$)$.
\end{conjecture}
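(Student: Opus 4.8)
The plan is to determine, among all solutions of the pendulum equation $\ddot\alpha+\omega^2\sin\alpha=0$ produced by the reduction above, exactly those whose Gauss representation closes up to a regular curve of length $2\pi$, and then to read off the shape of $\gamma_\alpha$ from the qualitative type of the pendulum trajectory. Recall from Proposition 1.4 that $\alpha$ represents a closed regular curve precisely when the coordinate conditions~(\ref{cond1}) hold together with the periodicity condition~(\ref{cond2}); equivalently, the tangent direction must return, $\alpha(2\pi)\equiv\alpha(0)\pmod{2\pi}$, and the position must close, $\int_0^{2\pi}e^{i\alpha(t)}\,dt=0$. The organizing principle is the phase portrait of the pendulum, which partitions the nontrivial motions ($\omega>0$) into librations (where $\alpha$ oscillates in an interval $(-\alpha_{\max},\alpha_{\max})\subset(-\pi,\pi)$), rotations (where $\alpha$ is strictly monotone and increases by $2\pi$ per period), the separatrix, and the equilibria; the degenerate case $\omega=0$ is handled separately.

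First I would dispose of the easy cases. For $\omega=0$ the equation gives $\ddot\alpha=0$, so $\alpha$ is affine; the periodicity condition forces the slope to be a nonzero integer $k$, after which the coordinate conditions hold automatically, so $\gamma_\alpha$ is a circle of radius $1/|k|$ traversed $|k|$ times. The equilibria $\alpha\equiv 0,\pi$ give straight segments that never close, and on the separatrix ($\xi=1$, where $\sn(u|1)=\tanh u$) the direction $\alpha$ tends to $\pm\pi$ only as $t\to\pm\infty$, so $\alpha(2\pi)\not\equiv\alpha(0)$ and the curve cannot close. These contribute nothing new.

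The first substantive step is to exclude the rotations. Here $\alpha$ is strictly monotone and increases by exactly $2\pi$ over each period, so the periodicity condition can hold only over a whole number of periods. I would then show that the net displacement over a single period is nonzero: passing to $\alpha$ as integration variable via $dt=d\alpha/\dot\alpha$ with $\dot\alpha(\alpha)=\sqrt{2E+2\omega^2\cos\alpha}$, the imaginary part of $\int e^{i\alpha}\,dt$ vanishes by antisymmetry about $\alpha=\pi$, while the real part $\int_0^{2\pi}\cos\alpha/\dot\alpha\,d\alpha$ can be rewritten through complete elliptic integrals of the first and second kind and shown to keep a fixed (negative) sign for every admissible energy. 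Thus the displacement accumulates over the integer number of periods and $\gamma_\alpha$ never closes, so rotations are ruled out.

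The main case, and the crux, is the libration. Writing the solution as $\sin(\alpha/2)=\xi\,\sn(\omega t|\xi)$ with $0<\xi<1$ and differentiating gives the curvature $\kappa=\dot\alpha=2\omega\xi\,\operatorname{cn}(\omega t|\xi)$, which changes sign, so $\gamma_\alpha$ has inflection points, and $\int\kappa\,dt=0$ over each period shows the Whitney index of one period is $0$. I would exploit the symmetries $\alpha(-u)=-\alpha(u)$ and $\alpha(u+2K(\xi))=-\alpha(u)$, inherited from $\sn(-u)=-\sn(u)$ and $\sn(u+2K)=-\sn(u)$: these realize the curve over one period as two congruent arcs interchanged by a reflection and meeting transversally at a single self-intersection, which is exactly the $\infty$-shape. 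Finally, evaluating the closure conditions with these symmetries shows that $\int_0^{2\pi}\sin\alpha\,dt$ vanishes precisely when the number of half-periods is even, while $\int_0^{2\pi}\cos\alpha\,dt=0$ constrains the admissible $\xi$, and the length normalization $p\cdot 4K(\xi)/\omega=2\pi$ fixes $\omega$, so that $\gamma_\alpha$ is the single figure-eight traversed $p$ times. I expect the hardest part to be the geometric identification of the librating solution as a genuine $\infty$-shaped curve with exactly one transversal self-crossing, rather than some other index-$0$ immersion; the reflection symmetry together with the monotone behavior of $\alpha$ on a quarter-period are the tools I would use to make this rigorous, alongside the sign analysis that excludes the rotations.
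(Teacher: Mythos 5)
You should first note that the statement you set out to prove is stated in the paper as a \emph{conjecture}: the paper does not prove it, and what it offers (inside the proof of Theorem~\ref{shapes} and in Remark~\ref{uniqueness}) is essentially the same phase-portrait case analysis you outline, terminating in a numerical observation. Your decomposition into equilibria/separatrix, rotations ("full swing"), and librations matches the paper's, and for the first two cases your argument is at the same level of rigor as the paper's (the claim that $\int_0^{2\pi}\cos\alpha\,dt$ has a fixed sign over a rotation period is asserted in both, with only a sketch of why). The genuine gaps lie exactly in the steps you compress into "constrains the admissible $\xi$": (i) that the remaining closure condition $\int_0^{2\pi}\cos\alpha\,dt=0$ has a \emph{unique} solution $\xi\in(0,1)$, the \emph{same for every even} $r$, equal to $\approx 0.90890856$; (ii) that no closed curves arise from the non-real zeros $2rK(\xi)+2IsK(1-\xi)$, $s\ne 0$, of $\sn$ (the paper explicitly leaves this as a sub-conjecture); and (iii) that the resulting librating solution is an $\infty$-shape with exactly one transversal self-crossing, which you yourself flag as the hardest part and defer. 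Items (i)--(iii) are precisely the content that keeps the statement a conjecture, so the proposal as written does not close it.

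For what it is worth, gap (i) is the most tractable: substituting $u=rK(\xi)t/\pi$ and using the periodicity of $\sn^2$ together with $\int_0^{K}\sn^2(u|\xi)\,du=(K(\xi)-E(\xi))/\xi^2$, where $E$ is the complete elliptic integral of the second kind, one finds that the paper's function $\Delta x(\xi,r)$ equals $2\pi\,(2E(\xi)-K(\xi))/K(\xi)$ for every $r$, which proves the $r$-independence observed only experimentally in the paper and reduces the determination of $\xi$ to the classical elastica equation $K(\xi)=2E(\xi)$; uniqueness of its root in $(0,1)$ then follows from $2E(0)-K(0)=\pi/2>0$, $2E(\xi)-K(\xi)\to-\infty$ as $\xi\to 1$, and a monotonicity argument. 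Incorporating this would carry your plan strictly beyond what the paper establishes, but without it (and without (ii) and (iii)) the proposal reproduces the paper's heuristic rather than supplying a proof.
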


We say a few words related to this conjecture in
Remark~\ref{uniqueness}.

\subsubsection{Proof of Theorem~\ref{shapes}
and construction of the $\infty$-shaped critical curve}

Consider the phase portrait (with coordinates
$(\alpha,\dot\alpha)$) of a pendulum; we do not specify $\omega$,
since all the phase spaces are similar (see  Figure 1). 
There are two non-singular types of trajectories and three
singular types.

\begin{figure}
$$
\epsfbox{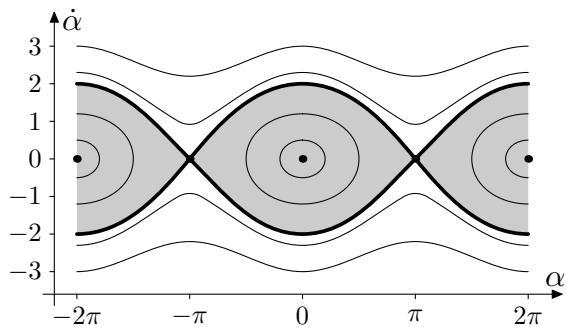}
$$
\caption{The phase portrait of the pendulum.}\label{portrait.1}
\end{figure}


\vspace{1mm}

{\bf Singular case.} In the singular case, we have singular points
of type $(2k\pi,0)$ (stable equilibrium) and $((2k{+}1)\pi,0)$
(unstable equilibrium) and all separatrices (nonperiodic motion of
the pendulum). At the equilibrium points the angle $\alpha$ does
not change, and hence $\gamma_\alpha$ is not closed. In the case
of the separatrices we have
$$
0<|\alpha(0)-\alpha(2\pi)|<2\pi.
$$
Hence the corresponding curve is not differentiable at
$\gamma_\alpha(0)=\gamma_\alpha(2\pi)$ even if it is closed. So in
the singular case there are  no closed differential curves $\gamma_\alpha$.

\vspace{1mm}

{\bf General case of full swing pendulum trajectories.} The full
swing pendulum trajectories are situated in the white regions of
Figure 1.
Let us show that in this case there are also no differentiable closed curves $\gamma_\alpha$.

Without loss of generality, we assume  that $\alpha(0)=0$, and,
therefore, $\alpha(2\pi)=2k\pi$ for some integer $k$ (since the
tangent vectors at $t=0$ and $t=2\pi$ must coincide). It is quite
easy to see that on the segments $[(2m+1/2)\pi,(2m+3/2)\pi]$, the
value of the modulus $|\dot \alpha|$ is smaller than at the
segments $[(2m-1/2)\pi,(2m+1/2)\pi]$. Hence the integral
$$
\int\limits_0^{2\pi}\cos\alpha(t)dt
$$
is nonzero, and therefore  any such curve is not closed.

\vspace{1mm}

{\bf Swing pendulum trajectories that are not full.} Such
trajectories are represented by the ovals on the phase portrait in
the gray regions of Figure 1.
In this case we have several closed differential curves $\gamma_\alpha$.

Again without loss of generality, we suppose that $\alpha(0)=0$,
and, therefore, $\alpha(2\pi)=0$ (since the pendulum never reaches
the vertical position corresponding to unstable equilibrium). This
is equivalent to the relation

$$
\xi\sn(2\pi\omega|\xi)=\sin\frac{\alpha(2\pi)}{2}=0.
$$
The solutions of the equation $\sn(a|b)=0$ are described as follows:
$$
a=2rK(b)+2IsK(1{-}b), \qquad \hbox{for $r,s\in \z$,}
$$
where $I=\sqrt{-1}$.

We conjecture that there are no closed curves in the cases for which
$s\ne 0$.

\begin{remark}\label{uniqueness}
In the major cases, when $s\ne 0$, we end up with solutions
$a=a(r,s,b)$ that have nonzero imaginary parts. Surprisingly, for
the case $r=s$ and $b=3$, the value of $a$, i.e.,
$$
2r(K(2)+IK(3)),
$$
seems to be integer (numerical calculations only show that the
first 1000 digits after the decimal point are zero ). We
conjecture that these cases occur not often enough to have yield
new closed curves $\gamma_\alpha$.
\end{remark}

\medskip
Let us study the case $s=0$. Then
$$
2\pi\omega=2rK(\xi), \qquad \hbox{for $r,\in \z$,}
$$
and so  $\omega=rK(\xi)/\pi$. Now let us check the
conditions for the curve to be closed. The first one is
$$
\int\limits_0^{2\pi}\sin\alpha(t)dt=0.
$$
It turns out that this condition holds automatically for even values of $r$
and does not hold for odd $r$:
$$
\int\limits_0^{2\pi}\sin\alpha(t)dt=-\frac{1}{\omega^2}
\int\limits_0^{2\pi}\ddot\alpha(t)dt=-\frac{1}{\omega^2}\big(\dot\alpha(2\pi)-\dot\alpha(0)\big).
$$
It is clear that $|\dot \alpha (0)|=|\dot\alpha(2\pi)|$, since
$\alpha(0)=\alpha(2\pi)=0$ is the minimal value of the potential
energy of the pendulum. For the signs, we have
$$
\dot\alpha(2\pi)=
\left\{
\begin{array}{l}
\dot\alpha(0) \quad \hbox{if $r$ is even,}\\
-\dot\alpha(0) \quad \hbox{if $r$ is odd.}\\
\end{array}
\right. .
$$
Now let us study the second condition
$$
\int\limits_0^{2\pi}\cos\alpha(t)dt=0.
$$
We have
$$
\int\limits_0^{2\pi}\cos\alpha(t)dt=
\int\limits_0^{2\pi}1{-}2\sin^2\frac{\alpha(t)}{2}dt=
\int\limits_0^{2\pi}1{-}2\xi^2\sn^2\Big(\frac{r
K(\xi)t}{\pi}\Big|\xi\Big)dt.
$$
Denote this integral by $\Delta x(\xi,r)$.

We are interested in the zeros of $\Delta x(\xi,r)$ on the open
interval $(1,1)$ corresponding to the swing pendulum trajectories
that are not full. Experiments show that the function $\Delta
x(\xi,r)$ does not depend on $r$, it is even and has two zeros.
We show this function on Figure 2 (left). 
Experimentally, the zeros of this function are at
$$
\xi\approx \pm .90890856.
$$
The resulting curve $\gamma_\alpha$ for $r=2$ is shown on
Figure 2 (right). 
The curves for the remaining even $r=2k$ are homothetic to those
in the case  $r=2$, since the curve is passed $k$ times and,
therefore, its length must be $2\pi/k$.

\begin{figure}
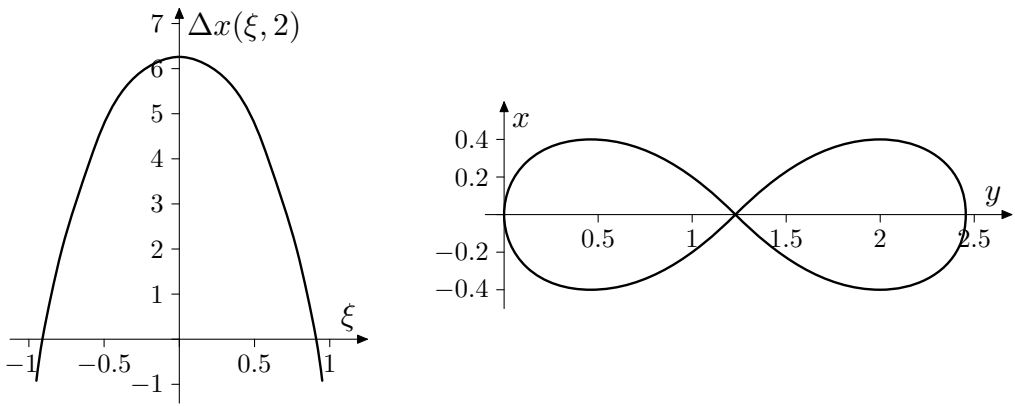

$$
\epsfbox{solution.1}\qquad\epsfbox{solution.2}
$$
\caption{An extremal solution for $r=2$.}\label{solution.1}
\end{figure}



Here are some concluding remarks about uniformization functionals.
For the case of the functionals $f(x)=x^{k}$ for $k>1$, the change
of the Whitney index implies the unboundedness of the energy
$U_f$. And hence the gradient flow of $U_f$ ends up either at some
critical point or at the singular boundary. We conjecture that in
the case $k>1$ the flow never reaches the boundary and, therefore,
it ends up at a critical point (in particular this means that the
Whitney index is preserved by the gradient flow).

In the case of the pendulum (i.e.,~$f(x)=x^{2}$), from
Theorem~\ref{shapes} we obtain the uniqueness of the critical
curve for each class of nonzero Whitney index: the critical curve
is a circle passed several times. In the case of a zero Whitney
index, we have an at least countable number of distinct critical
curves. They are the $\infty$-shaped curves introduced in
Conjecture~\ref{conj}. This case is more complicated, since we do
not know which of these critical curves are stable and which are
not.

\section{Resistance functionals for knot diagrams}

In the previous section, we constructed functionals whose
gradient flow takes diagrams to certain ``perfect" normal forms.
However, during these deformations to normal form, the knot type is
not necessarily  preserved. In this section, we introduce
additional terms that grow to infinity if the deformation changes
the knot type, thus ensuring that the knot type does not change.

\subsection{Resistance energy}

A {\it cycle} $\ell$ in a diagram is a subset of the diagram
homeomorphic to the circle. The {\it area} of a cycle $\ell$ is
the area of the domain it bounds; we denote it by $A(\ell)$.
An arc $ab$ is called {\it alternated} if $a$ is up-going and $b$
is down-going or vice versa ($a$ is down-going and $b$ is up-going).
Here we do not take into consideration intersections with the other
arcs in the interior of $ab$. A cycle is called {\it alternated} if all its
arcs are alternated. Denote the set of alternated cycles in the
diagram $\gamma$ by $\alt(\gamma)$.

\begin{example} In Figure 3 
we show the standard diagram of
the trefoil knot. It has 11 cycles: there are 6 one-arc cycles, 3
two-arc cycles and 2 three-arc cycles. All the cycles are
alternated.
\end{example}

\begin{figure}[h]
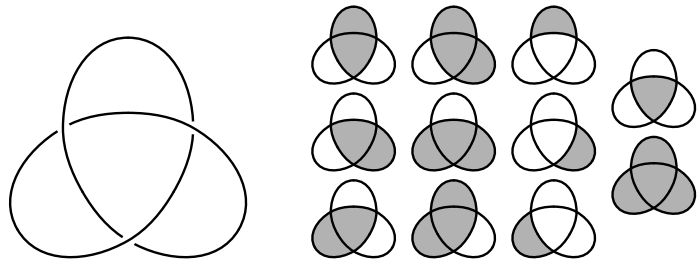

$$
\epsfbox{trefoil.1} \qquad \epsfbox{trefoil.2}
$$
\caption{A trefoil knot diagram and its alternated
cycles.}\label{trefoil.1}
\end{figure}


\begin{definition}\label{redef}
The {\it resistance energy} (or {\it RE} for short) is
$$
RE(\gamma)=\sum\limits_{\ell\in \alt(\gamma)} \frac{1}{A(\ell)}.
$$
\end{definition}

There is a serious practical disadvantage in using resistance
energy. The number of cycles in the diagram grows very rapidly
with respect to the crossing number; in the next two subsections
we show how to fix this problem.

Let us illustrate the situation
with the following example. Consider a graph $G(n)$ isomorphic to the
$n\times n$ lattice as on Figure 3 (left). 

\begin{figure}[h]
$$
\epsfbox{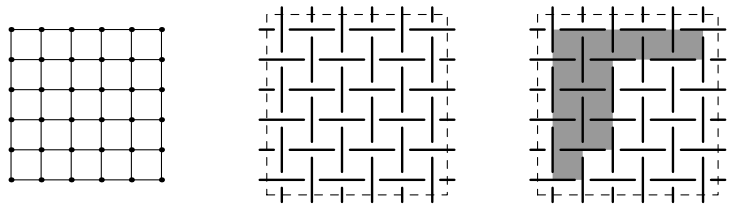}
$$
\caption{The graph $G(5)$, the corresponding alternated diagram
$G^*(5)$, and an alternated cycle representing a Young diagram of
shape $(5,2,2,2,1)$.}\label{g.1}
\end{figure}



The number of cycles in $G(n)$ grows very rapidly with respect to
$n$:

\medskip
\begin{center}
\begin{tabular}{|c||r|r|r|r|r|}
\hline
G(n) &$n=1$&$n=2$&$n=3$&$n=4$&$n=5$\\
\hline \hline
Number of vertices & 4 & 9 & 16 & 25 & 36\\
\hline
Cycles in $G$ & 1 & 13 & 213& 9349& 1222363 \\
\hline
\end{tabular}
\end{center}

\medskip
Suppose our diagram contains an alternated region $G^*(n)$ whose
graph is homeomorphic to $G(n)$ as on Figure 3 (center).

\begin{remark}
A region like $G^*(n)$ often occurs in a slightly perturbed
$\infty$-shaped curve passed several times in the neighborhood of a
former point of self-intersection.
\end{remark}

Now we choose alternated cycles of $G^*(n)$ among all the cycles
of $G(n)$. Let us give a general estimate. The region $G$ has
$(n+1)^2$ double points. For the number of cycles and alternated
cycles in $G(n)$ and in $G^*(n)$, respectively, we have the
following lower bounds.

\begin{proposition}
The number of cycles in $G(n)$ is at least
$$
\frac{4^n(n+1)}{\sqrt{\pi}n^{3/2}}-1.
$$
The number of alternated cycles in $G^*(n)$ is at least
$$
\frac{2^{n}}{\sqrt{2\pi n}}-1.
$$
\end{proposition}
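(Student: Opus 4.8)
The plan is to prove both bounds in the same way: exhibit an explicit, injectively parametrized family of cycles and then count that family by reducing it to a central binomial coefficient, which Stirling's formula converts into the stated closed form. The two estimates arise from two nested families, the full family of ``staircase'' cycles for the bound on $G(n)$ and its alternation-compatible subfamily for the bound on $G^*(n)$.

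First I would make the staircase construction precise. View $G(n)$ as the $(n{+}1)\times(n{+}1)$ array of vertices bounding $n\times n$ unit cells, and to a monotone lattice path across the box (equivalently, to a Young diagram $\lambda$ contained in the $n\times n$ box) associate the simple cycle obtained by closing the path up along two sides of the box; this is exactly the correspondence illustrated for the shape $(5,2,2,2,1)$. Since $\lambda$ is recovered from its boundary, the map $\lambda\mapsto\partial\lambda$ is injective, so the number of cycles in $G(n)$ is at least the number of nonempty such diagrams, a quantity of order $\binom{2n}{n}$; running the same construction along both diagonals of the box supplies the extra factor needed to absorb the $\tfrac{n+1}{n}$ in the claimed constant, and the lone degenerate diagram accounts for the ``$-1$''. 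I would then feed this count into the two-sided Stirling estimate
$$
\sqrt{2\pi m}\Big(\frac{m}{e}\Big)^{m}\le m!\le \sqrt{2\pi m}\Big(\frac{m}{e}\Big)^{m}e^{1/(12m)},
$$
which turns $\binom{2n}{n}$ into $\tfrac{4^n}{\sqrt{\pi n}}(1+o(1))$ and yields the first inequality.

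For the second bound I would keep only those staircase cycles that respect the checkerboard over/under pattern of the alternating diagram $G^*(n)$. The alternation condition forces the boundary to be a \emph{balanced} diagonal zigzag: at each of the $n$ diagonal steps only a binary choice survives, and the global balance needed to close the cycle up leaves $\binom{n}{\lfloor n/2\rfloor}$ admissible paths. A second application of the same Stirling estimate turns this central binomial coefficient into $\tfrac{2^n}{\sqrt{2\pi n}}$, again up to the single degenerate shape, giving the second inequality.

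I expect the genuine work to lie in the verification rather than in the asymptotics. The delicate points are: checking that every constructed staircase is an embedded (simple) cycle of $G(n)$, and in the second case an \emph{alternated} cycle of $G^*(n)$; confirming that the parametrization by lattice paths is injective so that no cycle is counted twice; and, crucially, showing that the alternation constraint cuts the admissible paths down to \emph{exactly} a central binomial coefficient rather than something smaller. Once these combinatorial facts are in place, extracting the constants $1/\sqrt\pi$ and $1/\sqrt{2\pi}$ from the Stirling bounds (and recording that the closed forms hold only for $n$ beyond a small threshold) is routine.
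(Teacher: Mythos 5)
Your proposal follows essentially the same route as the paper: both bounds come from the boundaries of Young diagrams contained in the $n\times n$ box, counted by $\binom{2n}{n}-1$ for $G(n)$ and by $\binom{n}{\lfloor n/2\rfloor}-1$ for the alternated ones in $G^*(n)$, followed by the standard central-binomial/Catalan asymptotics. If anything you are more scrupulous than the paper, which simply invokes $C_n\approx 4^n/(\sqrt{\pi}\,n^{3/2})$ and does not address the fact that $\binom{2n}{n}$ actually falls slightly short of $4^n(n+1)/(\sqrt{\pi}\,n^{3/2})$ --- the discrepancy you attempt to patch with the two-diagonal trick.
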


\begin{proof}
The number of nonempty disks in the shape of Young diagrams in $G(n)$
(see an example in Figure 3 (right)
is exactly ${2n\choose n}-1$. The boundaries of these disks are cycles. A simple estimate shows that  at least
${n \choose \lfloor n/2\rfloor}-1$
of the corresponding cycles in $G^*(n)$ are alternated. The
asymptotics for Catalan numbers is
$$
C_n\approx \frac{4^n}{\sqrt{\pi}n^{3/2}}.
$$
This implies the estimates of the proposition.
\end{proof}

\begin{corollary}
The asymptotical growth of the maximal number of alternated cycles
of diagrams with $N$ double points is at least
$C\frac{2^{n/2}}{\sqrt{n}}$ for some constant $C$.
\end{corollary}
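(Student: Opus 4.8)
The plan is to treat the corollary as a direct repackaging of the Proposition: all the combinatorial content (counting the Young-diagram cycles via the central binomial coefficient $\binom{n}{\lfloor n/2\rfloor}$ and discarding the non-alternated ones) has already been established there, so nothing new must be counted. The only task is to express the Proposition's lower bound, which is phrased in terms of the lattice parameter $n$, in the language of the double-point count $N$, and then to weaken it to the clean form stated. Here I read $n$ as the governing lattice parameter and $N=(n+1)^2$ as the corresponding number of double points, so that ``diagrams with $N$ double points'' and ``$\tfrac{2^{n/2}}{\sqrt n}$'' refer to the same regime through the identity $N=(n+1)^2$.

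First I would record the monotone relation between the two parameters. The alternated region $G^*(n)$ is modelled on the $n\times n$ lattice $G(n)$, whose $(n+1)^2$ vertices become the double points of $G^*(n)$; hence a diagram containing such a region has at least $(n+1)^2$ double points. Conversely, given a diagram with $N$ double points one may realise $G^*(n)$ for the largest $n$ with $(n+1)^2\le N$, i.e. $n=\lfloor\sqrt N\rfloor-1$, so the admissible lattice parameter satisfies $n=\Theta(\sqrt N)$. Applying the Proposition to this region produces at least $\tfrac{2^{n}}{\sqrt{2\pi n}}-1$ alternated cycles, which is therefore a lower bound for the maximal number of alternated cycles among all diagrams with $N$ double points.

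It then remains to pass from $\tfrac{2^{n}}{\sqrt{2\pi n}}-1$ to the asserted $C\,\tfrac{2^{n/2}}{\sqrt n}$. Since the ratio $\bigl(\tfrac{2^{n}}{\sqrt{2\pi n}}-1\bigr)\big/\tfrac{2^{n/2}}{\sqrt n}$ is comparable to $2^{n/2}$ and hence tends to infinity, there is a constant $C>0$ for which $\tfrac{2^{n}}{\sqrt{2\pi n}}-1\ge C\,\tfrac{2^{n/2}}{\sqrt n}$ for all sufficiently large $n$; absorbing the factor $\sqrt{2\pi}$, the bounded additive $-1$, and the shift $n\mapsto\lfloor\sqrt N\rfloor-1$ into $C$ gives the stated estimate. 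The ``hard part'' is accordingly only bookkeeping: verifying that the subtracted unit is negligible against the exponential and that the floor/shift perturbs the leading term by at most a multiplicative constant. Both are immediate, since $2^{n}$ dominates every polynomial in $n$ and $N=(n+1)^2$ is monotone, so no combinatorial input beyond the Proposition is required.
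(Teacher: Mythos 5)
Your proposal is correct and follows the paper's (implicit) route: the corollary carries no separate proof precisely because it is the Proposition's lower bound $\frac{2^{n}}{\sqrt{2\pi n}}-1$ for $G^*(n)$, re-expressed through the relation $N=(n+1)^2$ between double points and the lattice parameter and then weakened to $C\,\frac{2^{n/2}}{\sqrt{n}}$. Your resolution of the $n$-versus-$N$ notational mismatch in the statement is the natural one, and the bookkeeping (absorbing the $-1$, the $\sqrt{2\pi}$, and the floor/shift into $C$) is sound.
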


In fact, the real growth should be even higher. From the computational
point of view, such a growth rate is not acceptable. In the next
subsections, we describe two ways of reducing the running time.

\smallskip
We conclude this subsection with an interesting geometric
probabilistic problem that arises here. Denote by $D_m$ be a
two-dimensional disk with $m$ holes. Consider a graph $G\subset
\r^2$. Denote by $\# (G, D_m)$ the number of closed subsets of
$\r^2$ homeomorphic to $D_m$ with boundary in $G$. Note that the
number of cycles in a planar graph coincides with $\# (G,D_0)$.
The natural problem here is to {\it find the distributions of such
numbers with respect to the crossing number of $G$.} In addition,
we have the following particular asymptotic problem for $G(n)$.

\begin{problem}
For nonnegative integers $m_1$ and $m_2$ find the following limit
$$
\lim\limits_{n\to\infty}\frac{\# (G(n), D_{m_1})}{\# (G(n),
D_{m_2})}.
$$
\end{problem}

\subsection{Material resistance energy}

Consider a knot made of wire lying on the plane. If the area of
alternated cycles is large enough, then no resistance arises when
we move the wire. Resistance arises only when the area of some
alternated cycle is comparable with the width of the wire. We
formalize this in the following definition. A cycle is called {\it
$\delta$-critical} if its area is less than $\delta$. Denote the
set of all alternated $\delta$-critical cycles of a diagram
$\gamma$ by $\alt_\delta(\gamma)$.

\begin{definition}
The {\it material resistance energy} (or $MRE_\delta$ for short)
is
$$
MRE_\delta(\gamma)=\sum\limits_{\gamma\in \alt_\delta(\gamma)}
\left(\frac{1}{A(\gamma)}-\frac{1}{\delta}\right).
$$
\end{definition}

In general, the parameter $\delta$ depends on the shape (width,
i.e., on $\varepsilon$) of the wire. In the situation of ideal wire
(of zero width, i.e., when $\delta$ tends to 0), the support of
the functional $MRE_0$ is contained in the set of all singular
knot diagrams.

\medskip
The main advantage in this approach is that generically there are
not too many alternated cycles whose areas are very small
(basically tending to 0), so the majority of cycles are not in
$\alt_\delta(\gamma)$. One may determine the set
$\alt_\delta(\gamma)$ in the following two steps.

{\it Step 1}. Find the  ``low-area'' domains containing all simple
cycles for which the area is less than $\delta$ (we say that a
cycle is {\it simple} if it does not contain other points of the
diagram in the interior).

{\it Step 2}. For each connected component of each low-area
domain, check all the alternated cycles contained in its closure.

\begin{remark}
The MRE-functional simulates the real situation with a thin wire,
for which it is impossible to have infinitely many cycles in
diagrams, hence in real life the value of the MRE-functional is
always finite. Anyway, the corresponding algorithms usually work with a
piecewise discretization of knots, and for polygonal knots the
amount of cycles is always finite.
\end{remark}

\subsection{Genericity modification of MRE}

The MRE-functional still has a disadvantage: small-area regions
like $G^*(n)$ may occur at some moment of the gradient flow. This
will lead to an enormous increase in the complexity of the
gradient calculation. In this subsection, we modify the
MRE-functional in such a way that one needs to study only loops,
2-gonal, triangular, and quadrangular cycles.

Denote by $\crit_\delta^4(\gamma)$ the set of all 4-arc cycles in
$\gamma$, denote also by $\alt_\delta^{\le 3}(\gamma)$ the set of
all alternated loops, 2-, and 3-cycles. Let
$$
\Gamma_\delta(\gamma)=\alt_\delta^{\le
3}(\gamma)\cup\crit_\delta^4(\gamma).
$$
\begin{definition}
The {\it genericity material resistance energy} (or $GMRE_\delta$
for short) is
$$
GMRE_\delta(\gamma)=\sum\limits_{\ell\in\Gamma_\delta(\gamma)}
\left(\frac{1}{A(\ell)}-\frac{1}{\delta}\right).
$$
\end{definition}

Any diagram with $n$ crossings has less then ${n^p}/{p!}$ cycles
with $p$ arcs each. Hence the amount of cycles in
$\Gamma_\delta(D)$ is always bounded by
$$
\frac{n^4}{24}+\frac{n^3}{6}+\frac{n^2}{2}+n.
$$

\subsection{On the deformation of knot diagrams}

Let $\gamma:S^1\to \r^2$ be a $C^2$-regular curve. We say that
$\gamma$ is {\it immersed} if the image $\gamma(S^1)$ has finitely
many double points of transversal self-intersections. We say that
$\gamma$ is {\it codimension-one-immersed} if it has finitely many
double points of transversal self-intersections and either a
triple point of transversal self-intersection or a double point at
which the intersection is not transversal.

Let $\gamma_t$ be a deformation in the space of regular curves of
$S^1$ into $\r^2$. We say that a deformation $g_t$ is {\it
generic} if for all but a finite number of points $t$ the image
$g_t(S^1)$ is immersed, and at these points it is
codimension-one-immersed.

We consider a knot diagram as an immersed curve equipped with a
crossing type (i.e., overpass-underpass information) at each
double point. We also extend the notion of knot diagram to the
case of codimension-one-immersed curves. From computational point
of view it is worthy to preserve the crossings (at least locally
for triple points). So we regard a triple point as three double
points and specify crossing types at all these double points
(without taking care whether such three crossings are realizable
as the projection of a knot or not). A tangential double point is
regarded as a pair of double points of the same crossing type.
Denote the space of all immersed and codimension-one-immersed
diagrams by $\D$.

A deformation $\gamma_t$  in $\D$ is {\it generic} if for all but
a finite number of points $t$ the image $g_t(S^1)$ is immersed, at
these points it is codimension-one-immersed, and the crossing
types depend smoothly on the crossings with respect to the
parameter $t$.

\begin{remark}
Notice that a generic deformation of a diagram (in our definition)
does not necessarily preserve the knot type. It can change while
passing through a codimension-one-regular diagram with a triple
point, like it does on the following picture.
$$
\epsfbox{change.1}
$$

\end{remark}

\begin{theorem}
Let $\gamma_t$ be a generic deformation in the space of regular
knot diagrams with parameter $t$ in $[0,1]$. Suppose that there
exists $\delta$ such that the $GMRE_\delta$-functional is
uniformly bounded from above for the whole deformation. Then the
knot types of $\gamma_0$ and of $\gamma_1$ coincide.
\end{theorem}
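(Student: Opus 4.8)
The plan is to show that the knot type is locally constant in $t$ and can only fail to be constant as $t$ crosses one of the finitely many parameter values where $\gamma_t$ is codimension-one-immersed, and then to rule out the dangerous such values using the hypothesis on $GMRE_\delta$. First I would observe that on each open interval of parameters where $\gamma_t$ is immersed, the combinatorial type of the diagram is constant and the crossing data varies smoothly, so the deformation there is realized by an ambient isotopy in $\r^3$; hence the knot type is constant on each such interval. Since the deformation runs through regular (immersed) curves it is a regular homotopy, so the Whitney index is preserved and no $\Omega_1$ event can occur; the only codimension-one events are therefore self-tangencies and triple points. It thus suffices to analyze what happens as $t$ crosses one of these finitely many singular values $t^*$.

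Next I would dispose of the self-tangency case. By the paper's convention a tangential double point is a pair of double points of the same crossing type, so crossing such a $t^*$ effects a Reidemeister $\Omega_2$ move: a bigon is created or destroyed with one strand lying over the other at both of its corners. Such a move preserves the knot type, so self-tangency values of $t^*$ are harmless. It remains to treat triple points.

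The heart of the argument is to classify a triple point by the small triangular cycle $\ell_t$ it bounds for $t$ near $t^*$. The three strands meeting at the triple point determine a tournament on $\{A,B,C\}$ recording which strand lies over at each of the three crossings, and this tournament is either transitive or cyclic. I would check that the transitive (layered) case is exactly a legal $\Omega_3$ move, under which the knot type is preserved, and that in this case the triangle $\ell_t$ is \emph{not} alternated (the top strand is an overpass at both its corners and the bottom strand an underpass at both); whereas the cyclic case is precisely the over/under pattern that cannot be slid through without a crossing change, so it is the only configuration at which the knot type can change, and in this case $\ell_t$ \emph{is} alternated. Establishing the equivalence ``the knot type changes at $t^*$ $\iff$ $\ell_t$ is alternated'' is the crux of the proof, and I expect it to be the main obstacle: it amounts to relating the transitivity of the local height order to the alternation of over- and underpasses around $\ell_t$, and to confirming (consistently with the Remark preceding the theorem) that the cyclic pattern genuinely alters the knot type.

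Finally I would close the argument with the energy hypothesis. Suppose the knot type changes at some triple-point value $t^*$. By the previous step the bounding triangle $\ell_t$ is an alternated $3$-arc cycle, so once its area drops below $\delta$ we have $\ell_t\in\alt_\delta^{\le 3}(\gamma_t)\subset\Gamma_\delta(\gamma_t)$. But at a triple point three arcs become concurrent, so $A(\ell_t)\to 0$ as $t\to t^*$; since every summand of $GMRE_\delta$ is nonnegative,
$$
GMRE_\delta(\gamma_t)\ \ge\ \frac{1}{A(\ell_t)}-\frac{1}{\delta}\ \longrightarrow\ +\infty,
$$
contradicting the assumed uniform upper bound. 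Hence no knot-type-changing value $t^*$ occurs, the knot type is constant on $[0,1]$, and in particular $\gamma_0$ and $\gamma_1$ have the same knot type. This also explains why $\Gamma_\delta$ was defined to contain all alternated cycles with at most three arcs: the triangles produced by illegal triple points are exactly the $\le 3$-arc cycles that must be detected.
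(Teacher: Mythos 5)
Your proposal is correct and follows essentially the same route as the paper, whose own proof is only a three-sentence sketch: the generic singular events are double tangencies and triple points (i.e.\ $\Omega_2$/$\Omega_3$ moves), and any knot-type-changing surgery would force an alternated cycle to disappear, contradicting the lower bound on areas implied by the uniform bound on $GMRE_\delta$ --- your tournament classification of triple points and the blow-up estimate supply exactly the details the paper omits. One small caution: rather than dismissing the self-tangency case purely by the same-crossing-type convention, it is safer to observe that a knot-type-changing tangency would require an alternated bigon (a $2$-arc cycle in $\alt_\delta^{\le 3}(\gamma_t)$) to shrink to zero area, which your energy argument rules out verbatim.
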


\begin{proof}
Since the deformation is generic, the diagrams occurring in the
deformation have only double tangency points or triple points. The
surgeries at these points are described by the second and the
third Reidemeister moves. All other surgeries are forbidden,
otherwise an alternated cycle disappears (but its area is bounded
from below by $1/C$).
\end{proof}

\subsection{Total energy related to the pendulum}
Let us say a few words about the combination (sum) of the
uniformization functional $U_f$ for $f(x)=x^2$ and the material
resistance energy $MRE$, i.e.,
$$
E(\gamma)=U_{x^2}(\gamma) + MRE_\delta(\gamma).
$$
Critical positions of $U_{x^2}$ are either circles passed several
times or $\infty$-shaped curves passed several times. In the case
of a nonzero Whitney index, the functional $MRE_\delta$ does not
allow the knot diagram to reach the shape of perfect circles or
perfect $\infty$-shaped curves. The knot diagram ends up either as
a circular braid (which is almost a circle passed several times)
or as almost an $\infty$-shaped curve passed several times. In
addition, it is possible that the gradient flow will take the knot
diagram to a diagram containing alternated cycles of small area.
For instance, this is the case when the knot type is non-trivial
and the Whitney index of the starting curve is one. So we have the
following

\begin{theorem}
The critical knots for the energy
$$
E(\gamma)=MRE_\delta(\gamma)+U_{x^2}(\gamma)
$$
are either a circle passed once, or an $\infty$-shaped curve
passed once, or contains at least one alternated cycle of area
less than $\delta$.
\end{theorem}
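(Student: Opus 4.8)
The plan is to prove the stated trichotomy by separating the configurations on which the resistance term is inactive from those on which it is not. Let $\gamma$ be a critical knot for $E=MRE_\delta+U_{x^2}$, and suppose that $\gamma$ does \emph{not} satisfy the third alternative, i.e.\ that $\gamma$ contains no alternated cycle of area less than $\delta$; I must then show that $\gamma$ is a circle passed once or an $\infty$-shaped curve passed once. The first observation is that under this assumption $\alt_\delta(\gamma)=\emptyset$, so $MRE_\delta(\gamma)=0$. Moreover each summand $\tfrac{1}{A(\ell)}-\tfrac1\delta$ of $MRE_\delta$ is strictly positive on its support (there $A(\ell)<\delta$), so $MRE_\delta\ge 0$ everywhere and $\gamma$ is in fact a global minimum of the resistance term. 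Generically the areas of all alternated cycles of $\gamma$ are bounded away from $\delta$, so $MRE_\delta$ vanishes identically on a neighborhood of $\gamma$ in $\D$; hence its first variation at $\gamma$ is zero.

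Consequently the first variation of $E$ reduces to that of $U_{x^2}$, and the criticality of $\gamma$ for $E$ forces $\gamma$ to be a critical point of $U_{x^2}$ alone. By Theorem~\ref{shapes}, $\gamma$ is then a circle passed several times or an $\infty$-shaped curve passed several times. It remains to rule out multiplicity $k\ge 2$: I claim that any honest knot diagram realizing a circle (respectively, an $\infty$-curve) of total length $2\pi$ passed $k\ge 2$ times must contain an alternated cycle of area less than $\delta$, which contradicts the standing assumption and therefore leaves only $k=1$, that is, the two admissible normal forms (a) and (b).

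The heart of the matter — and the step I expect to be the main obstacle — is exactly this last claim. A circle passed $k$ times idealizes to a $k$-fold cover of a circle of radius $1/k$, which is not an immersion with isolated transversal double points and so does not belong to $\D$; to represent it as a genuine diagram one must spread the $k$ coincident strands into a circular braid of nearby, pairwise transversal strands (and likewise for the $\infty$-curve, whose $k$ strands run close together away from the former node). Adjacent strands of such a braid bound thin alternated cycles whose areas shrink to $0$ as the diagram approaches the idealized multiply-covered curve. The technical core is a quantitative estimate guaranteeing that for $k\ge 2$ at least one of these cycles already has area below the fixed threshold $\delta$; this is precisely the mechanism, described informally before the statement, by which $MRE_\delta$ keeps the diagram from collapsing onto a perfect multiply-covered circle or $\infty$-curve. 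Granting this estimate, the hypothesis that $\gamma$ carries no alternated cycle of area less than $\delta$ is incompatible with $k\ge 2$, so $\gamma$ is a circle passed once or an $\infty$-shaped curve passed once, which completes the trichotomy.
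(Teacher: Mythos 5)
Your proposal takes essentially the same route as the paper's (two-sentence) proof: if no alternated cycle has area below $\delta$ then $MRE_\delta$ vanishes near $\gamma$, criticality reduces to that of $U_{x^2}$, and Theorem~\ref{shapes} gives circles or $\infty$-shaped curves. You are in fact more careful than the paper on the one point its proof silently skips --- why multiplicity $k\ge 2$ is excluded --- correctly observing that a $k$-fold covered circle or $\infty$-curve is not an immersed diagram and that any genuine diagram approximating it (a circular braid) carries thin alternated cycles of area below $\delta$; the quantitative estimate you leave unproved is likewise absent from the paper.
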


\begin{proof}
In Theorem~\ref{shapes}, we showed that the critical curves for
$U_{x^2}$ are exactly circles and $\infty$-shaped curves. For all
other critical points, the functional $MRE_\delta$ must
contribute, which implies that at least one of the critical cycles
is of area less then $\delta$.
\end{proof}

The same statement holds for the repulsive energy $GMRE_\delta$.

\begin{theorem}
The critical knots for the energy
$$
E(\gamma)=GMRE_\delta(\gamma)+U_{x^2}(\gamma)
$$
are either a circle passed once, or an $\infty$-shaped curve
passed once, or contains at least one alternated cycle of area
less than $\delta$. \qed
\end{theorem}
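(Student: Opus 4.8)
The plan is to reduce this statement entirely to the previous theorem (the $MRE_\delta$ version) plus the basic structural fact relating $GMRE_\delta$ to the underlying alternated cycles. The key observation is that $\Gamma_\delta(\gamma)$, the index set for $GMRE_\delta$, always contains all alternated loops, $2$-cycles, and $3$-cycles, and in addition all $4$-arc cycles. Thus $GMRE_\delta$ dominates the restricted sum over \emph{small} alternated cycles of low combinatorial complexity; in particular, if a diagram is neither a once-traversed circle nor a once-traversed $\infty$-shaped curve, the argument of the preceding theorem produces an alternated cycle of area less than $\delta$, and I would argue that one can take this cycle to be of the simplest type (a loop, $2$-gon, or triangle), so that it already lies in $\Gamma_\delta(\gamma)$ and forces the corresponding summand $1/A(\ell)-1/\delta$ to be positive.

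Concretely, I would proceed as follows. First, invoke Theorem~\ref{shapes}: the only critical curves of $U_{x^2}$ are circles and $\infty$-shaped curves, each possibly traversed several times. Second, observe that for the combined functional $E=GMRE_\delta+U_{x^2}$ to have a critical knot that is \emph{not} one of the once-traversed model curves, the term $GMRE_\delta$ must be contributing nontrivially at that configuration --- otherwise the gradient would coincide with that of $U_{x^2}$ alone and the critical point would be one of the pure uniformization critical curves (circle or $\infty$-curve passed once, the multiply-traversed cases being excluded exactly as in the $MRE_\delta$ theorem). Third, a nonvanishing contribution of $GMRE_\delta$ means some $\ell\in\Gamma_\delta(\gamma)$ has $1/A(\ell)-1/\delta\neq 0$, i.e.\ $A(\ell)<\delta$; this is precisely the asserted alternated cycle of area less than $\delta$. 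Here I would use that every element of $\Gamma_\delta(\gamma)$ with positive contribution is by definition $\delta$-critical.

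The main obstacle, and the point deserving the most care, is the genericity restriction built into $\Gamma_\delta$: unlike $MRE_\delta$, which sums over \emph{all} alternated $\delta$-critical cycles, the functional $GMRE_\delta$ only sees loops, $2$-cycles, $3$-cycles, and $4$-arc cycles. So I must rule out the pathological possibility that a critical knot acquires a small-area alternated cycle that is of type $\ge 5$ arcs while having \emph{no} small cycle among the combinatorially simple ones --- in which case $GMRE_\delta$ would be blind to it and the reduction to the $MRE_\delta$ argument would break. The honest way to handle this is to note that the statement is phrased so that the conclusion (``contains at least one alternated cycle of area less than $\delta$'') refers to any alternated cycle, not necessarily one in $\Gamma_\delta$; consequently the logic only requires that $GMRE_\delta$ be \emph{able} to detect a nontrivial configuration, which it does whenever \emph{some} element of $\Gamma_\delta$ is $\delta$-critical. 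Since the preceding theorem already establishes the trichotomy for $MRE_\delta$, and since $\Gamma_\delta(\gamma)\subseteq \alt_\delta(\gamma)$ so that a positive $GMRE_\delta$-contribution immediately yields an alternated $\delta$-critical cycle, the conclusion transfers verbatim. This is exactly why the statement is marked with \qed in the excerpt: the proof is the observation that the argument of the previous theorem uses only the presence of a small simple cycle, which is precisely what $\Gamma_\delta$ retains.
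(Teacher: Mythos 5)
Your proposal is correct and follows essentially the same route as the paper, which proves this theorem by simply transferring the two-line argument given for the $MRE_\delta$ version (critical curves of $U_{x^2}$ are exactly the circles and $\infty$-shaped curves; at any other critical point the resistance term must contribute, forcing a cycle of area less than $\delta$). Your extra discussion of why the restriction of $\Gamma_\delta(\gamma)$ to low-complexity cycles does not break the argument is a sensible elaboration of a point the paper leaves implicit, but it does not constitute a different approach.
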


\section{Perspectives}

In the sequel to this paper, we intend to describe the results of
computer experiments with the functionals defined above, or, more
precisely, with discretizations of these functionals. We
conjecture that the experiments will show that the normal forms
obtained classify flat $\varepsilon$-knots up to flat isotopy, but
only when the number $c$ of crossings is small (say $c\leq 30$).

\bigskip
A further potentially interesting direction of study is related to the
application of Theorem 1 from [12], which asserts that there is a deep
relationship between normal forms of flat $\varepsilon$-knots and
those of ordinary (3D) knots, in order to obtain information on the
latter by using results obtained for the former.
\medskip

\vspace{5mm}
\end{document}